\Crefname{algocfline}{Algorithm}{Algorithms}
\Crefname{algocf}{line}{lines}
\Crefname{assumption}{Assumption}{Assumptions}
\title{\LARGE \bf
A Convex Formulation of Game-theoretic Hierarchical Routing 
}
\author{Dong Ho Lee$^1$, Kaitlyn Donnel$^1$, Max Z. Li$^2$, and David Fridovich-Keil$^1$
\thanks{Dong Ho Lee, Kaityln Donnel and David Fridovich-Keil are with the Department of Aerospace Engineering and Engineering Mechanics, University of Texas at Austin, Austin, TX 78712, USA (e-mail: {\tt leedh0124@utexas.edu; kaitlyndonnel@utexas.edu; dfk@utexas.edu}).}
\thanks{Max Z. Li is with the Department of Aerospace Engineering, the Department
of Civil and Environmental Engineering, and the Department of Industrial and
Operations Engineering, University of Michigan, Ann Arbor, MI 48109 USA
(e-mail: {\tt maxzli@umich.edu}).}
\thanks{This work was supported by a National Science Foundation CAREER award under Grant No. 2336840. (\it {Corresponding author: Dong Ho Lee.})}%
}
\newcommand{\BibTeX}{\rm B\kern-.05em{\sc i\kern-.025em b}\kern-.08em\TeX}
\newtheorem{remark}{Remark}
\newtheorem{definition}{Definition}[section]
\newtheorem{proposition}{Proposition}
\newcommand{\ie}{i.e.}
\newcommand{\st}{s.t.}
\newcommand{\argmin}{\mathop{\rm argmin}}
\newcommand{\R}{\mathbb{R}}
\newcommand{\X}{\mathcal{Z}}
\newcommand{\Lagrangian}{\mathcal{L}}
\newcommand{\equality}{g}
\newcommand{\inequality}{h}
\newcommand{\jth}{$j^\mathrm{th}$~}
\newcommand{\kth}{$k^\mathrm{th}$~}
\newcommand{\kpth}{$(k+1)^\mathrm{th}$~}
\newcommand{\binvarz}{z}
\newcommand{\waypoints}{K}
\newcommand{\horizon}{T}
\newcommand{\nodes}{N}
\newcommand{\players}{M}
\newcommand{\bvar}{\mathbf{z}}
\newcommand{\xstate}{x}
\newcommand{\control}{u}
\newcommand{\xdim}{n}
\newcommand{\xposition}{p}
\newcommand{\velocity}{v}
\newcommand{\acceleration}{a}
\newcommand{\cost}[1]{J^{#1}}
\newacronym[longplural=Nash equilibria,plural=NEs]{ne}{NE}{Nash equilibrium}
\newacronym[longplural=open-loop Nash equilibria,plural=OLNE]{olne}{OLNE}{open-loop Nash equilibrium}
\newacronym{lq}{LQ}{linear quadratic}
\newacronym{minlp}{MINLP}{mixed integer nonlinear problem}
\newacronym{miqp}{MIQP}{mixed integer quadratic problem}
\newacronym{lp}{LP}{linear programming}
\newacronym{qp}{QP}{quadratic programming}
\newacronym{micp}{MiCP}{mixed complementarity problem}
\newacronym{kkt}{KKT}{Karush-Kuhn-Tucker}
\newacronym{mpec}{MPEC}{mathematical program with equilibrium constraints}
\newacronym{mpcc}{MPCC}{mathematical program with complementarity constraints}
\newacronym{gnep}{GNEP}{generalized Nash equilibrium problem}
\newacronym[longplural=generalized Nash equilibria,plural=GNEs]{gne}{GNE}{generalized Nash equilibrium}
\newacronym{nlp}{NLP}{nonlinear programming}
\newacronym{mfcq}{MFCQ}{Mangasarian-Fromowitz constraint qualification}
\newacronym{licq}{LICQ}{linear independence constraint qualification}
\newacronym{cq}{CQ}{constraint qualification}
\begin{document}

\maketitle
\thispagestyle{empty}
\pagestyle{empty}

\begin{abstract}

Hierarchical decision-making is a natural paradigm for coordinating multi-agent systems in complex environments such as air traffic management. In this paper, we present a bilevel framework for game-theoretic hierarchical routing, where a high-level router assigns discrete routes to multiple vehicles who seek to optimize potentially noncooperative objectives that depend upon the assigned routes. To address computational challenges, we propose a reformulation that preserves the convexity of each agent's feasible set. This convex reformulation enables a solution to be identified efficiently via a customized branch-and-bound algorithm. Our approach ensures global optimality while capturing strategic interactions between agents at the lower level. We demonstrate the solution concept of our framework in two-vehicle and three-vehicle routing scenarios.
\end{abstract}

\section{Introduction} \label{sec:intro}

Coordinating multiple vehicles in a controlled fashion is a crucial challenge for decision making in applications such as autonomous transportation systems and air traffic management.  
Such problems are inherently hierarchical:
routing decisions assign paths to multiple vehicles, and individual vehicles determine optimal trajectories in response to their assigned routes.
Current approaches treat these problems in isolation, i.e., higher-level routing decisions operate on discrete graphs which abstract away the continuous lower-level trajectory optimization process.
In this work, we present a technique for solving these coupled problems \emph{simultaneously}, and 
showcase its performance in an example inspired by air traffic management.
Ultimately, our results suggest that integrating the discrete (multi-)vehicle routing problem with low-level trajectory design can reduce vehicles' control effort and yield more efficient system-wide performance.

A natural approach to capturing this hierarchical structure is through bilevel optimization, where the upper level handles discrete decision making in the form of route assignments, and the lower level encodes continuous optimization of individual vehicles' trajectories.
Information flows between these levels bidirectionally.
The upper-level routing problem accounts for the physical state and operational requirements of individual vehicles, while, at the lower-level, the vehicles generate trajectories that respect the prescribed routing plan.

On their own, routing problems are well-studied in the context of (mixed) integer programs \cite{zeng2014solving,kleinert2021survey}.
Similarly, the lower-level trajectory optimization is well-explored in both single- and multi-agent contexts; in this work, we consider the more general multi-agent, potentially noncooperative, variants of these problems.

Concretely, we make the following contributions.
\emph{(i)} We formulate hierarchical vehicle routing and trajectory design problems as a mixed-integer bilevel program, where the upper level models a discrete routing problem and the lower level encodes a (potentially noncooperative) trajectory design game played among $\players$ vehicles. \emph{(ii)} We propose a reformulation that integrates the two levels, and solve the resulting problem via a customized branch-and-bound algorithm.  
Experimental results showcase the proposed approach in a scenario inspired by air traffic management.

\section{Related Work} \label{sec:related}


Current optimization models for air traffic management typically focus on strategic decisions, such as optimizing flight delays \cite{richetta1994dynamic,vranas1994multi} or airspace sector route assignments \cite{bertsimas1998air}. Recent work examines hierarchical congestion pricing and route planning via a bilevel optimization approach, but ignores vehicle dynamics entirely \cite{wu2025managing}. On the other hand, aircraft and air vehicle trajectory optimization studies (e.g., \cite{taye2024safe}) typically do not jointly optimize higher-level strategic routing or traffic flow management decisions.
This separation between strategic decision-making and trajectory optimization can lead to inefficiencies, as strategic plans may not fully account for vehicle dynamics and interactions at the trajectory level.
This work contributes an approach which begins to close this gap.

At the trajectory level, multi-agent interactions play a critical role in ensuring efficient airspace management. 
Game-theoretic approaches \cite{zhu2014game,williams2023distributed, 9197129,di2019newton,forrestsiopt,10021943,10160799} have been widely explored, and efficient methods exist which can compute approximate and/or local Nash equilibrium strategies in multi-agent, noncooperative settings.
However, these approaches generally assume a fixed environment \emph{without} a high-level planner that can influence individual agents' decisions. 

Hierarchical game-theoretic frameworks have also been proposed for trajectory optimization, primarily in the context of human-autonomous vehicle interactions \cite{fisac2019hierarchical,JI2023104109}. 
These works typically model motion planning as a dynamic Stackelberg game, where an autonomous vehicle (follower) anticipates and reacts to a human driver (leader). 

Existing works on bilevel optimization for routing problems \cite{marinakis2007new,fan2011bi,bianco2009bilevel} consider hierarchical decision-making between multiple stakeholders.
However, they often assume cooperative settings and do not model interactions between individual entities at the lower level. 
Moreover, their solution methods often rely on metaheuristic algorithms such as genetic algorithms, or involve linearizing nonconvex constraints to obtain tractable relaxed formulations. 

Unlike previous approaches, our work introduces a distinct hierarchical structure that explicitly integrates high-level strategic routing with (potentially noncooperative) low-level trajectory optimization.  
Our proposed mixed-integer bilevel framework enables joint optimization of both levels of decision-making for more efficient management of multiple vehicles. This has wide applicability for future air traffic management automation systems (cf. \cref{sec:conclusion}).

\section{Problem Formulation} \label{sec:problem}

In this section, we formulate a hierarchical routing game problem. This formulation can easily accommodate additional application-specific constraints, cf. \cref{sec:method}. 

\subsection{Preliminaries}
We represent the environment as a graph with $\nodes$ nodes, each corresponding to a point of interest to be visited (e.g., airspace navigational aids), and $\players$ vehicles, each generating their own continuous state/action trajectories.
The nodes are defined in a 2D Cartesian plane\footnote{This formulation readily extends to 3D environments.} as $\mathcal{P} := \{\hat{\mathbf{p}}_i\}_{i=1}^\nodes = \{(\hat x_i, \hat y_i)\}_{i=1}^\nodes$.
Additionally, each vehicle's trajectory starts at their own \emph{start} node $S^j$, whose position is $\hat{\mathbf{p}}^j_S = (\hat x^j_S, \hat y^j_S)$, and ends at \emph{terminal} node $T^j$, which is located at $\hat{\mathbf{p}}^j_T = (\hat x^j_T, \hat y^j_T)$. 
A \emph{waypoint} is any node included in the route;
a route has $\waypoints$ waypoints, and cannot have more than $\nodes$ waypoints, \ie, $\waypoints \leq \nodes$. Between each pair of consecutive waypoints, $\horizon$ intermediate \emph{subwaypoints} define a segment of the given vehicle's trajectory. 

\subsection{Proposed Bilevel Framework}
The proposed bilevel program is of the form:
%
\begin{subequations} 
\label{eqn:III-prelim-bilevel} \begin{align} 
&\min_{\substack{{\bvar_u \in \X_u}\\{\bvar_l \in \X_l}}} ~ \cost{}_u(\bvar_u, \bvar_l) \label{eqn:III-prelim-top-level-obj}\\ 
&\mathrm{\st} ~ \forall j \in [\players] \ \Bigg\{ \bvar_l^j \in \argmin_{\tilde\bvar_l^j\in \X_l^j(\bvar_u, \tilde\bvar_l^{\neg j})} \cost{j}_l(\bvar_u, \tilde\bvar_l)\,, \quad \label{eqn:III-prelim-gnep}
\end{align} 
\end{subequations}
where $\bvar_u \in \R^{\xdim_u}$ and $\bvar_l \in \R^{\xdim_l}$ represent the upper-level (discrete) and the lower-level (continuous) decision variables, respectively. 
Similarly, $  \cost{}_u(\cdot):\R^{\xdim_u} \times \R^{\xdim_l} \to \R $ and $ \cost{}_l(\cdot):\R^{\xdim_u} \times \R^{\xdim_l} \to \R $ denote the objective functions at the upper and lower level, respectively. 
The upper-level variables $\mathbf{\binvarz}_u$ are binary, i.e., $\X_u \coloneqq \{\underline \binvarz_u^k, \bar \binvarz_u^k\}^{n_u},$ where $\underline \binvarz_u^k=0$ and $\bar \binvarz_u^k=1$.
The superscript on $\bvar^j_l \in \R^{\xdim^j_l}$ in \cref{eqn:III-prelim-gnep} refers to the $j^\mathrm{th}$ vehicle's variables. 
We use $\bvar^{\neg j}_l \in \R^{\xdim_l - \xdim^j_l}$ to denote the variables of all vehicles except $j$.
$[\players]$ in \cref{eqn:III-prelim-gnep} denotes the set $\{1, \dots, M\}$.
In subsequent sections, we characterize the structure of the constraints and cost functions in \cref{eqn:III-prelim-bilevel}.

\subsection{Upper-level Routing Problem}
The upper level routing problem in \cref{eqn:III-prelim-bilevel} models a \emph{router} that determines the sequence of waypoints for all vehicles' routes, given fixed \emph{initial} and \emph{terminal} nodes for each vehicle. 
We formulate the upper level as a discrete problem that consists of binary decision variables $\mathbf{\binvarz}_u \coloneqq [\mathbf{\binvarz}^1, \dots, \mathbf\binvarz^{\players}]$ where $ \mathbf\binvarz^j \coloneqq [\binvarz_{S^j,1}^j, \dots, \binvarz_{S^j,K}^j, \binvarz_{1,1}^j, \dots \binvarz_{i,k}^j, \dots \binvarz_{N,K}^j, \dots \binvarz_{T^j,K}^j] \in \mathbb{R}^{\nodes \waypoints}, ~\forall j \in [\players]$.  
The binary variable $\binvarz_{i,k}^j$ is defined such that $\binvarz_{i,k}^j = 1$ if node $i$ is the $k^\mathrm{th}$ waypoint for vehicle $j$, and $\binvarz_{i,k}^j = 0$ otherwise.
$\X_u$ in \cref{eqn:III-prelim-top-level-obj} denotes the feasible set for $\mathbf{\binvarz}_u$ and is defined in terms of constraints that must be satisfied for a feasible route. 
By definition, the following constraints naturally encode the beginning and the end of a trajectory: 
%
\begin{equation}
\label{eqn:III-top-level-node-equality1}
\binvarz_{S^j,1}^j = 1, \binvarz_{T^j,K}^j = 1, ~\forall j \in [\players].
\end{equation}
No vehicle can be assigned two waypoints at the same time, and at most one vehicle can be assigned any waypoint at the same time. 
These constraints are encoded by: 
\begin{align}
\label{eqn:III-top-level-node-equality2}
&\sum_{i=1}^{\nodes} \binvarz_{i,k}^j = 1, ~\forall k \in [\waypoints], j \in [\players], \\
\label{eqn:III-top-level-inequality1}
&\sum_{j=1}^{\players} \binvarz_{i,k}^j \leq 1, ~\forall i \in [\nodes]\cup \{S^j,T^j\}, k \in [\waypoints].
\end{align}
Furthermore, a node may not be visited more than once by any vehicle, i.e.,
\begin{equation}
\label{eqn:III-top-level-node-inequality3}
\sum_{k=1}^{\waypoints} \binvarz_{i,k}^j \leq 1, ~\forall i \in [\nodes] \cup \{S^j,T^j\}, j \in [\players].
\end{equation}
\Cref{eqn:III-top-level-node-equality1,eqn:III-top-level-node-equality2,eqn:III-top-level-node-inequality3,eqn:III-top-level-inequality1} together define $\X_u$ in \cref{eqn:III-prelim-top-level-obj}. 

\subsection{Lower-level Trajectory Game}
The lower-level trajectory planning problem in \cref{eqn:III-prelim-gnep} is mathematically formulated as a multi-agent trajectory game, whose solution is a \ac{ne} \cite{basar1998}.
The lower-level decision variables $\bvar_l \coloneqq [\bvar^1_l, \dots, \bvar_l^M]$ are continuous, and 
for each vehicle $j$, $\bvar^j_l \coloneqq (\mathbf\xstate^j, \mathbf\control^j) = (\xstate^j_{1:\waypoints,1:\horizon}, \control^j_{1:\waypoints,1:\horizon})$ where $\xstate^j_{k,t} \in \R^{n_x}$ and $\control^j_{k,t} \in \R^{n_u}$ represent the state and the control input variables of the \jth vehicle.
We consider a variant of \ac{lq} games \cite{dirkse1995path} for this level, \ie, \cref{eqn:III-prelim-gnep} is in the form of:
\begin{subequations} 
\label{eqn:III-prelim-lqgame} \begin{align} 
&\min_{\mathbf\xstate^j, \mathbf\control^j} ~ \overbrace{\frac{1}{2} \sum_{\substack{{k=1}\\{t=1}}} \Bigl(Q(\xstate^j_{k,t},\binvarz_{1:N,k}^j) + \|\control_{k,t}^j\|_2^2 \Bigr) + I(\mathbf\xstate^j, \mathbf\xstate^{\neg j})}^{\cost{j}_l(\mathbf\binvarz_u, \mathbf\xstate, \mathbf\control^j)} \label{eqn:III-prelim-lqgame-obj}\\  
&\mathrm{\st \quad} \xstate_{k,{t+1}}^j = A\xstate_{k,t}^j + B\control^j_{k,t}, ~\forall k<\waypoints, ~\forall t<\horizon, \label{eqn:III-prelim-lqgame-constraint1} \\
&\phantom{\mathrm{\st \quad}} \xstate_{{k+1},1}^j = A\xstate_{k,T}^j + B\control^j_{k,T}, ~\forall k<\waypoints, \label{eqn:III-prelim-lqgame-constraint2} \\
&\phantom{\mathrm{\st \quad}} \underline{\xstate} \leq \xstate^j_{k,t} \leq \overline{\xstate}, ~\forall k\in[\waypoints], ~\forall t\in[\horizon], \label{eqn:III-prelim-lqgame-state-bounds} \\
&\phantom{\mathrm{\st \quad}} \underline{\control} \leq \control^j_{k,t} \leq \overline{\control}, ~\forall k\in[\waypoints], ~\forall t\in[\horizon]. \label{eqn:III-prelim-lqgame-control-bounds}
\end{align} 
\end{subequations}

\cref{eqn:III-prelim-lqgame-constraint1,eqn:III-prelim-lqgame-constraint2,eqn:III-prelim-lqgame-control-bounds,eqn:III-prelim-lqgame-state-bounds} together characterize $\X_l$ in \cref{eqn:III-prelim-bilevel}.
The subscripts $(k,t)$ in \cref{eqn:III-prelim-lqgame} correspond to time step $t$, after passing the \kth waypoint.
Specifically, \cref{eqn:III-prelim-lqgame-constraint1} describes state transition within the trajectory segment $k$, \ie, from the \kth to the \kpth waypoint. 
Similarly, \cref{eqn:III-prelim-lqgame-constraint2} defines the changes from the final time step $\horizon$ of segment $k$ to the initial time step of segment $k+1$. 
\cref{eqn:III-prelim-lqgame-state-bounds,eqn:III-prelim-lqgame-control-bounds} encode the bounds on the state and control variables for the \jth vehicle.
We discuss the cost structure \cref{eqn:III-prelim-lqgame-obj} of the lower-level problem in the next section. 

\subsection{Structure of Cost Functions}
We now characterize the structure of objective functions at the upper \cref{eqn:III-prelim-top-level-obj} and lower \cref{eqn:III-prelim-gnep} levels. 
The upper-level routing problem minimizes the total control effort, as a proxy for fuel consumption, i.e., $\cost{}_u(\bvar_u, \bvar_l) \coloneqq \sum_{j=1}^\players \sum_{k=1}^\waypoints \sum_{t=1}^\horizon \|\control^j_{k,t}\|^2_2$.
Note that the variable $\control^j_{k, t}$ \cref{eqn:III-prelim-lqgame} is determined at the lower level \emph{in response to} the routing choice $\mathbf\binvarz_u$ \cref{eqn:III-prelim-bilevel} at the upper level.


The cost function at the lower level, $\cost{j}_l(\mathbf\binvarz_u, \mathbf\xstate, \mathbf\control^j)$, in \cref{eqn:III-prelim-lqgame-obj} is comprised of two parts. 
The first term captures individual vehicle costs such as state penalities and control effort, while the second term accounts for interaction effects among vehicles. 
The function $Q(\xstate^j_{k,t},\binvarz_{1:N,k}^j)$ is convex and quadratic with respect to both $\xstate^j_{k,t}$ and $\binvarz_{1:N,k}^j$. 
In \cref{sec:experiment}, we describe how $Q(\xstate^j_{k,t},\binvarz_{1:N,k}^j)$ can penalize trajectory deviations from the upper-level routing plan. 
The inter-agent interaction term $I(\mathbf\xstate^j, \mathbf\xstate^{\neg j})$ is also a convex quadratic function, and encodes the influence of other vehicles' trajectories on vehicle $j$'s cost. 
For example, in \cref{sec:experiment} we use this interaction term to model aircraft formation flight.

The lower-level game consists of linear constraints and quadratic objective functions for all vehicles, \ie, the lower-level game is a convex game parameterized by $\mathbf\binvarz_u$.
Thus, the \ac{kkt} conditions for each agent's problem are both necessary and sufficient for optimality \cite{basar1998}.


\section{Methodology} \label{sec:method}

In this section, we present a \ac{kkt} reformulation \cite{dempe2013bilevel,dempe2015bilevel} of the hierarchical routing game in \cref{eqn:III-prelim-bilevel}. 
We introduce auxillary variables to transform the problem into a \ac{miqp}, and develop a customized branch-and-bound algorithm for solving the resulting problem.

\subsection{\ac{kkt} Reformulation}
For the ease of discussion, we express the lower-level \ac{lq} game \cref{eqn:III-prelim-lqgame} in the following form:
\begin{subequations} 
\label{eqn:IV-method-lqgame} \begin{align} 
\textit{(Vehicle j's problem)}\quad &\min_{\mathbf\xstate^j, \mathbf\control^j} \cost{j}_l(\mathbf\binvarz_u, \mathbf\xstate, \mathbf\control^j) \qquad\qquad \label{eqn:IV-method-lqgame-obj}\\  
&\mathrm{\st \quad} \equality(\mathbf\xstate^j,\mathbf\control^j) = 0, \label{eqn:IV-method-equality-constraint} \\
&\phantom{\mathrm{\st \quad}} \inequality(\mathbf\xstate^j,\mathbf\control^j) \geq 0 \label{eqn:IV-method-inequality-constraint}, 
\end{align} 
\end{subequations}
where $\mathbf\xstate \coloneqq [\mathbf\xstate^j, \mathbf\xstate^{\neg j}]$.
Equation \cref{eqn:IV-method-equality-constraint} encodes the equality constraints from \cref{eqn:III-prelim-lqgame-constraint1,eqn:III-prelim-lqgame-constraint2}. 
These govern the dynamics of the \jth vehicle's trajectory.
Similarly, \cref{eqn:IV-method-inequality-constraint} refers to the bounds in \cref{eqn:III-prelim-lqgame-control-bounds,eqn:III-prelim-lqgame-state-bounds}, which encode the state and control bounds. 

The original bilevel problem \cref{eqn:III-prelim-bilevel} and its single-level reformulation using \ac{kkt} conditions are equivalent under the assumption that the lower-level problem is convex for each vehicle $j$ \emph{and} that Slater's condition holds for all feasible upper-level variables, $\mathbf\binvarz_u \in \X_u$. 

\begin{definition}[Slater's condition for the lower level]
\label{def:Slater}
For a given feasible $\mathbf\binvarz_u \in \X_u$ of \cref{eqn:III-prelim-bilevel}, Slater's constraint qualification holds for the lower-level problem if there exists $\tilde{\mathbf\xstate}^j,\tilde{\mathbf\control}^j$ \st ~$\equality(\tilde{\mathbf\xstate}^j,\tilde{\mathbf\control}^j) = 0$ and $\inequality(\tilde{\mathbf\xstate}^j,\tilde{\mathbf\control}^j) > 0$. 
\end{definition}

Under these assumptions, we can rewrite \cref{eqn:IV-method-lqgame} using its \ac{kkt} conditions. 
We obtain the following single-level reformulation: 
\begin{subequations}
\label{eqn:IV-method-kkt-formulation}
\begin{align}
\min_{\mathbf\binvarz_u,\mathbf\binvarz_l,\lambda,\mu} \quad & \cost{}_u(\mathbf\binvarz_u, \mathbf\binvarz_l) \\
\mathrm{\st \qquad \ ~} & \mathbf\binvarz_u \in \X_u, \\
& \nabla_{\mathbf\xstate^j,\mathbf\control^j} \Lagrangian(\mathbf\binvarz_u,\mathbf\xstate,\mathbf\control^j,\lambda^j,\mu^j) = 0, \label{eqn:IV-method-kkt-stationary} \\
&  0 \leq \inequality(\mathbf\xstate^j,\mathbf\control^j) \ \bot \ \lambda_j \geq 0, \label{eqn:IV-method-complementarity}\\
&  \equality(\mathbf\xstate^j,\mathbf\control^j) = 0, ~\quad \forall j \in [\players], \label{eqn:IV-method-equality}  
\end{align}
\end{subequations}
where the \jth agent's \emph{Lagrangian} function is defined as $\Lagrangian(\mathbf\binvarz_u,\mathbf\xstate,\mathbf\control^j,\lambda^j,\mu^j) \coloneqq \cost{j}_l(\mathbf\binvarz_u, \mathbf\xstate, \mathbf\control^j) - \lambda^{j\top} \inequality(\mathbf\xstate^j,\mathbf\control^j) - \mu^{j\top} \equality(\mathbf\xstate^j,\mathbf\control^j)$. 
Note that $\mu^j$ and $\lambda^j$ refer to the dual variables corresponding to the inequality and equality constraints of the \jth vehicle in \cref{eqn:IV-method-lqgame}, respectively.

We observe that the \ac{kkt} reformulation introduces complementarity constraints in \cref{eqn:IV-method-complementarity}, i.e., $\inequality(\mathbf\xstate^j,\mathbf\control^j) \ \bot \ \lambda_j \equiv \inequality(\mathbf\xstate^j,\mathbf\control^j)^\top \lambda_j = 0$, which involve bilinear terms that break the convexity of the feasible set in \cref{eqn:IV-method-kkt-formulation}. 
For general \ac{nlp} problems at the upper level, these complementarity constraints \cref{eqn:IV-method-kkt-formulation} can be handled using relaxation-based approaches \cite{scholtes2001convergence,schwartz2011mathematical}. 
However, since the upper level variables $\binvarz_u$ are binary, this formulation results in a nonconvex \ac{minlp}, which is intractable to solve.
A standard strategy in the \ac{minlp} literature is to obtain convex relaxations of the underlying nonconvex problems \cite{belotti2013mixed,kronqvist2019review}.
However, rather than developing tighter convex relaxations of an inherently nonconvex problem, we propose a more direct approach: reformulating the problem to preserve convexity from the outset.

\subsection{Introducing Auxillary Variables to Preserve Convexity}
The nonconvexity of \cref{eqn:IV-method-kkt-formulation} arises from the inequality constraints in \cref{eqn:IV-method-inequality-constraint} of the \ac{lq} game. 
To bypass this issue, we introduce auxillary variables at the upper level to \emph{reconstruct} the equilibrium trajectories from the lower-level problems. 
To this end, we define  auxillary variables  $\Breve{\bvar}_u \coloneqq (\mathbf{\Breve{\xstate}}^j, \mathbf{\Breve{\control}}^j) = (\Breve{\xstate}^j_{1:\waypoints,1:\horizon}, \Breve{\control}^j_{1:\waypoints,1:\horizon})$, and obtain the following problem:
\begin{subequations} 
\label{eqn:IV-method-bilevel-auxillary} 
\begin{alignat}{3} 
&\min_{\bvar_u, \Breve{\bvar}_u, \bvar_l} && \cost{}_u(\Breve{\bvar}_u, \bvar_l) + \alpha\sum_{j=1}^M \bigg\|\begin{bmatrix}
        \mathbf{\Breve{\xstate}}^j \\
        \mathbf{\Breve{\control}}^j 
    \end{bmatrix} - \begin{bmatrix}
        \mathbf{\xstate}^j \\
        \mathbf{\control}^j
    \end{bmatrix} \bigg\|_2^2 \label{eqn:IV-method-bilevel-auxillary-top-level-obj} \\
& \mathrm{\st \quad} && \equality(\mathbf{\Breve{\xstate}}^j,\mathbf{\Breve{\control}}^j) = 0, ~\inequality(\mathbf{\Breve{\xstate}}^j,\mathbf{\Breve{\control}}^j) \geq 0, && \forall j \in [\players], \\
& && \bvar_l^j \in \argmin_{\tilde\bvar_l^j\in \tilde{\X}_l^j(\bvar_u, \tilde\bvar_l^{\neg j})} \cost{j}_l(\bvar_u, \tilde\bvar_l), && \forall j \in [\players], \label{eqn:IV-method-bilevel-auxillary-gnep}
\end{alignat}
\end{subequations}

where $\cost{}_u(\Breve{\bvar}_u, \bvar_l) \coloneqq \sum_{j=1}^\players \sum_{k=1}^\waypoints \sum_{t=1}^\horizon \|\Breve{\control}^j_{k,t}\|^2_2$ and $\alpha>0$ is a weight parameter that penalizes large deviations from $(\mathbf{\xstate}^j, \mathbf{\control}^j)$. 
We can interpret $(\mathbf{\xstate}^j, \mathbf{\control}^j)$ as a \emph{reference} trajectory that arises from the noncooperative game played at the lower level (i.e., accounting for individual and inter-agent interaction costs, cf. \cref{sec:problem}). 
Note that $\tilde{\X}_l^j \coloneqq \{(\mathbf\xstate^j,\mathbf\control^j) \, | \, \equality(\mathbf\xstate^j,\mathbf\control^j) = 0\}$ in \cref{eqn:IV-method-bilevel-auxillary-gnep}.
Similarly,  $(\mathbf{\breve{\xstate}}^j, \mathbf{\Breve{\control}}^j)$ refers to the \emph{adjusted} trajectory (for the \jth vehicle) that must still satisfy the dynamics $\equality(\Breve{\mathbf\xstate}^j,\Breve{\mathbf\control}^j) = 0$ and state-control bounds $\inequality(\Breve{\mathbf\xstate}^j,\Breve{\mathbf\control}^j) \geq 0$. 

With the auxillary variables defined in \cref{eqn:IV-method-bilevel-auxillary}, we can write its \ac{kkt} reformulation as follows:
\begin{subequations} 
\label{eqn:IV-method-bilevel-auxillary-reform} 
\begin{alignat}{3} 
&\min_{\substack{\bvar_u, \Breve{\bvar}_u,\\\bvar_l, \mu}} && \cost{}_u(\Breve{\bvar}_u, \bvar_l) + \alpha\sum_{j=1}^M \bigg\|\begin{bmatrix}
        \mathbf{\Breve{\xstate}}^j \\
        \mathbf{\Breve{\control}}^j 
    \end{bmatrix} - \begin{bmatrix}
        \mathbf{\xstate}^j \\
        \mathbf{\control}^j
    \end{bmatrix} \bigg\|_2^2 \label{eqn:IV-method-bilevel-auxillary-top-level-obj-reform} \\
&\mathrm{\st \quad } && \equality(\mathbf{\Breve{\xstate}}^j,\mathbf{\Breve{\control}}^j) = 0, ~\inequality(\mathbf{\Breve{\xstate}}^j,\mathbf{\Breve{\control}}^j) \geq 0, && \forall j \in [\players], \label{eqn:IV-method-auxillary-reform-constraint1}\\
& && \nabla_{\mathbf\xstate^j,\mathbf\control^j} \tilde{\Lagrangian}(\mathbf\binvarz_u,\mathbf\xstate,\mathbf\control^j,\mu^j) = 0, && \forall j \in [\players],\label{eqn:IV-method-auxillary-reform-constraint2}\\
& && \equality(\mathbf\xstate^j,\mathbf\control^j) = 0,  && \forall j \in [\players], \label{eqn:IV-method-auxillary-reform-constraint3}
\end{alignat}
\end{subequations}
where $\tilde{\Lagrangian}(\mathbf\binvarz_u,\mathbf\xstate,\mathbf\control^j,\mu^j) = \cost{j}_l(\mathbf\binvarz_u, \mathbf\xstate, \mathbf\control^j) - \mu^{j\top} \equality(\mathbf\xstate^j,\mathbf\control^j)$.
Note that the \ac{kkt} conditions of the lower-level game in \cref{eqn:IV-method-bilevel-auxillary-gnep} are free of inequality constraints.
This, in turn, shows that the single-level reformulation \cref{eqn:IV-method-bilevel-auxillary-reform} does not have complementarity constraints as in \cref{eqn:IV-method-complementarity}.
As a consequence, the feasible set in \cref{eqn:IV-method-bilevel-auxillary-reform} remains convex (apart from the binary nature of the variables $\mathbf{\binvarz}_u$). 
This implies that, ultimately, problem \cref{eqn:IV-method-bilevel-auxillary-reform} is a \acf{miqp}.
With this transformation, any continuous relaxation of the upper-level binary variables $\mathbf{\binvarz}_u$ in \cref{eqn:IV-method-bilevel-auxillary-reform} yields a convex \acf{qp} problem which can be efficiently solved to obtain a lower bound on the optimal cost of problem \cref{eqn:IV-method-bilevel-auxillary-reform} in a branch-and-bound scheme.
We discuss the resulting algorithm leveraging this reformulation in the next section.

\subsection{Branch-and-bound algorithm}
We present our branch-and-bound method algorithm for solving problem \cref{eqn:IV-method-bilevel-auxillary-reform} in \cref{branch-and-bound} and discuss its convergence properties in \cref{proposition-convergence}. 

At any iteration $k$, \cref{branch-and-bound} solves a \emph{relaxed} version of \cref{eqn:IV-method-bilevel-auxillary-reform} in which the binary variables are treated as continuous and restricted to specific upper and lower bounds, i.e., $\underline{\mathbf{\binvarz}}_u^k$ and $\bar{\mathbf{\binvarz}}_u^k$ (cf. $\X_u$ in \cref{eqn:III-prelim-bilevel}).
If the relaxed problem results in a solution that is less optimal than the best solution obtained so far, i.e., $\tilde{\cost{k}_u} \geq \cost{*}_u$, we prune it, and thereby eliminate the need to expore all potential subproblems with tighter relaxations on the binary variables. 
If the solution contains binary values for $\mathbf\binvarz^k_u$, then the algorithm examines the resulting cost and, if superior to the best it has already found (i.e., $\tilde{\cost{k}_u} < \cost{*}_u$), records $\mathbf\binvarz^k_u$ as the new candidate optimum; otherwise, the feasible set is partitioned along a fractional-valued coordinate in $\mathbf\binvarz^k_u$ to create two further subproblems.

\begin{algorithm}
\label{branch-and-bound}
    \caption{Branch-and-Bound Algorithm}
    Initialize: Root node $N\coloneqq (\cost{}_u,\underline{\mathbf{\binvarz}}_u, \bar{\mathbf{\binvarz}}_u)$, queue $\mathcal{Q}$, $\bvar_u^* \gets \emptyset$, $\cost{*}_u \gets \infty$, $\cost{}_u \gets -\infty$, $\underline{\mathbf{\binvarz}}_u \gets \mathbf{0}$, $\bar{\mathbf{\binvarz}}_u \gets \mathbf{1}$, $Q \gets Q \cup \{N\}$, $k \gets 0$ \\
    \While{$\mathcal{Q}$ is not empty}{
        $N^k = (\cost{k}_u,\underline{\mathbf{\binvarz}}_u^k, \bar{\mathbf{\binvarz}}_u^k) \gets Dequeue(\mathcal{Q})$ \\
        \If{$\cost{k}_u \geq \cost{*}_u$}{
            \bf{continue}\tcp*{Prune (suboptimal)}
        }
        Solve the relaxed problem at $N^k$ to get $\bvar^k_u$ and $\tilde{\cost{k}_u}$ \label{alg:relaxed_problem} \\
        \If{$N^k$ is infeasible}{
            \bf{continue}\tcp*{Prune (infeasible)}
        }
        \Else{
            \If{$\tilde{\cost{k}_u} \geq \cost{*}_u$}{
                \bf{continue}\tcp*{Prune (suboptimal)}
            }
        }
        
        \If{$\bvar^k_u$ are binary}{
            \If{$\tilde{\cost{k}_u} < \cost{*}_u$}{
                $\bvar^*_u \gets \bvar^k_u, \cost{*}_u \gets \tilde{\cost{k}_u}$\tcp*{New best soln}
            }
            \bf{continue} 
        }
        \Else{ \tcp{Create subproblems}
            \If{$\tilde{\cost{k}_u} < \cost{*}_u$}{
                Select index $j$ such that $\bvar^k_u(j) \notin \mathbb{Z}$ \\ 
                $
                N_{\text{left}} \coloneqq (\tilde{\cost{k}_u}, \underline{\mathbf{\binvarz}}_u^k, \bar{\mathbf{\binvarz}}_u^k), \quad \bar{\mathbf{\binvarz}}_u^k(j) \gets \lfloor \bvar^k_u(j) \rfloor$
                
                $
                N_{\text{right}} \coloneqq (\tilde{\cost{k}_u}, \underline{\mathbf{\binvarz}}_u^k, \bar{\mathbf{\binvarz}}_u^k), \quad \bar{\mathbf{\binvarz}}_u^k(j) \gets \lceil \bvar^k_u(j) \rceil
                $
                
                $\mathcal{Q} \gets \mathcal{Q} \cup \{N_{\text{left}}, N_{\text{right}}\}$
            }
        }
        $k \gets k+1$
    }
    \Return $\bvar^*_u$, $\cost{*}_u$
\end{algorithm}


\begin{proposition}
\label{proposition-convergence}
    Under the following conditions, \cref{branch-and-bound} finds a global optimal solution to \cref{eqn:IV-method-bilevel-auxillary-reform}:
    \begin{enumerate}
        \item Slater’s condition (\cref{def:Slater}) holds for each vehicle’s problem in the lower-level game \cref{eqn:IV-method-bilevel-auxillary-gnep}. \label{method-prop-1}
        \item $\equality(\mathbf{\Breve{\xstate}}^j,\mathbf{\Breve{\control}}^j)$, $\inequality(\mathbf{\Breve{\xstate}}^j,\mathbf{\Breve{\control}}^j)$ in \cref{eqn:IV-method-auxillary-reform-constraint1} and $\equality(\mathbf{\xstate}^j,\mathbf{\control}^j)$ in \cref{eqn:IV-method-auxillary-reform-constraint3} are linear. \label{method-prop-2}
        \item The interval relaxation of binary variables $\mathbf{\binvarz}_u$ results in a single-level convex \ac{qp} problem. \label{method-prop-4}
        \item The lower-level cost function $\cost{j}_l(\mathbf{\binvarz}_u,\Breve{\bvar}_l)$ and the upper-level cost function $\cost{}_u(\Breve{\bvar}_u, \bvar_l)$ are convex and quadratic. \label{method-prop-3}
    \end{enumerate}
\end{proposition}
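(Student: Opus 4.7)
The plan is to verify three things in sequence: (i) problem \cref{eqn:IV-method-bilevel-auxillary-reform} is a well-posed convex \ac{miqp} whose \ac{kkt} block faithfully represents the lower-level game; (ii) each node subproblem in \cref{branch-and-bound} is a convex \ac{qp} whose optimum is a valid lower bound on the integer-feasible descendants; and (iii) the branch-and-bound recursion terminates with a certified global optimum.

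First I would establish the \ac{miqp} structure. Condition \ref{method-prop-2} makes both $\equality$ and $\inequality$ affine, so the constraints \cref{eqn:IV-method-auxillary-reform-constraint1,eqn:IV-method-auxillary-reform-constraint3} are linear in the continuous variables. Because Condition \ref{method-prop-3} makes each $\cost{j}_l$ convex quadratic, its gradient $\nabla_{\mathbf\xstate^j,\mathbf\control^j}\cost{j}_l$ is affine in $(\mathbf\xstate,\mathbf\control)$; combined with affinity of $\equality$, the stationarity equation \cref{eqn:IV-method-auxillary-reform-constraint2} is affine in $(\mathbf\xstate,\mathbf\control,\mu)$. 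The objective \cref{eqn:IV-method-bilevel-auxillary-top-level-obj-reform} is a sum of convex quadratics, so the only non-affine component of \cref{eqn:IV-method-bilevel-auxillary-reform} is the integrality restriction on $\mathbf\binvarz_u$. In parallel, Condition \ref{method-prop-1} supplies the constraint qualification and Condition \ref{method-prop-3} the convexity needed for the \ac{kkt} conditions \cref{eqn:IV-method-auxillary-reform-constraint2,eqn:IV-method-auxillary-reform-constraint3} to be both necessary and sufficient for every vehicle's best response in \cref{eqn:IV-method-bilevel-auxillary-gnep}, so the equilibrium set is captured exactly and the feasible set of \cref{eqn:IV-method-bilevel-auxillary-reform} contains no spurious points.

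Second I would verify the relaxation property invoked in \cref{alg:relaxed_problem}. Replacing each binary restriction on $\mathbf\binvarz_u$ with an interval constraint $[\underline{\mathbf\binvarz}_u^k,\bar{\mathbf\binvarz}_u^k] \subseteq [0,1]^{n_u}$ preserves linearity of every constraint and convex-quadraticity of the objective, yielding a convex \ac{qp} by Condition \ref{method-prop-4}, which can be solved to global optimality. Since the integer-feasible set within node $N^k$ is contained in its relaxed box, the relaxed optimum $\tilde{\cost{k}_u}$ lower-bounds the cost attainable by any integer-feasible descendant, so the pruning test $\tilde{\cost{k}_u} \geq \cost{*}_u$ never discards an improving point.

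Third I would close with the standard branch-and-bound argument. Each branching fixes one fractional coordinate $\bvar^k_u(j)$ to $\lfloor \bvar^k_u(j) \rfloor$ or $\lceil \bvar^k_u(j) \rceil$, strictly shrinking the relaxed box along one axis, so after at most $n_u$ branchings every root-to-leaf path reaches an integer-feasible or infeasible leaf; hence $\mathcal{Q}$ empties in finitely many iterations. Whenever the relaxation returns an integer $\bvar_u^k$, it is feasible in \cref{eqn:IV-method-bilevel-auxillary-reform}, and the incumbent update rule preserves the invariant that $\cost{*}_u$ equals the minimum cost among all integer-feasible points already visited. Combined with the lower-bound property from step (ii), this guarantees that on termination $\cost{*}_u$ is the global optimum of \cref{eqn:IV-method-bilevel-auxillary-reform}. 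The main obstacle lies in step (i): showing that \cref{eqn:IV-method-auxillary-reform-constraint2,eqn:IV-method-auxillary-reform-constraint3} exactly characterize the \ac{ne} set of \cref{eqn:IV-method-bilevel-auxillary-gnep}. Sufficiency rests on Condition \ref{method-prop-3}, since convexity ensures a \ac{kkt} point is a best response; necessity rests on Condition \ref{method-prop-1}, which supplies the multipliers $\mu^j$. Once that equivalence is pinned down, the remainder reduces to standard \ac{miqp} branch-and-bound bookkeeping.
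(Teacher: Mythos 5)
Your proposal is correct and follows essentially the same argument as the paper's own proof: convexity of the reformulation under the stated conditions, validity of the lower bounds from the interval-relaxed \ac{qp} subproblems, safety of the pruning rule, and finiteness of the enumeration yielding a certified global optimum. Your version is in fact more detailed than the paper's---notably in pinning down the \ac{kkt}--equilibrium equivalence (sufficiency from convexity, necessity from Slater's condition) and in arguing finite termination via bounded branching depth rather than simply counting the $2^{\players\nodes\waypoints}$ possible binary assignments---but these are refinements of the same route, not a different one.
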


\begin{proof}
    By construction, the formulation in \cref{eqn:IV-method-bilevel-auxillary-reform} is convex and its feasible set has a non-empty interior since Slater's condition holds for the lower-level game (\cref{method-prop-1,method-prop-2} from \cref{proposition-convergence}). The subproblems in the queue $\mathcal{Q}$ employ continuous interval relaxations of binary variables $\mathbf\binvarz_u$ and thus are convex. This ensures that each solution to the relaxed subproblem obtained at line \ref{alg:relaxed_problem} provides a valid lower bound on the objective \cref{eqn:IV-method-bilevel-auxillary-top-level-obj-reform}. At any node, if a solution is found that improves upon the current best, then the current best solution is updated, which ensures that the sequence of candidate optimal values ($\cost{k*}_u$) is decreasing monotonically. 
    Since there are a finite number of subproblems to explore---$2^{\players \nodes \waypoints}$---\cref{branch-and-bound} terminates in a finite number of steps with a globally optimal solution. 
\end{proof}
\begin{remark}
    \cref{branch-and-bound} can be enhanced through several strategies to improve performance. Some of these strategies include variable selection rules \cite{achterberg2005branching} and node selection heuristics \cite{linderoth1999computational} that affect the search. Additionally, cutting plane generation \cite{costa2009benders} can tighten the relaxation at each node, potentially reducing the number of nodes explored. 
\end{remark}
    

\section{Experimental Results} \label{sec:experiment}
In this section, we showcase the performance of our proposed hierarchical routing framework in a small example, leveraging the reformulation in \cref{eqn:IV-method-bilevel-auxillary-reform}. 
We implement \cref{branch-and-bound} with the PATH ~\cite{dirkse1995path} solver in the Julia programming language.
We consider a formation flight scenario involving two vehicles and three vehicles to demonstrate how the vehicles' interactions influence their individual trajectories in accordance with the high-level routing decisions. 
\subsection{Experiment Setup}

We first formalize vehicle dynamics at the lower level, which are the equality constraints in \cref{eqn:III-prelim-lqgame-constraint1,eqn:III-prelim-lqgame-constraint2}.  
Concretely, we model each vehicle as a double integrator with $\xstate^j_{k,t} = [\xposition_{k,t}^{j,x}, \xposition_{k,t}^{j,y}, \velocity_{k,t}^{j,x}, \velocity_{k,t}^{j,y}]^\top \in \R^4$ encoding the state of the vehicle, and $\control^j_{k,t} = [\acceleration_{k,t}^{j,x}, \acceleration_{k,t}^{j,y}]^\top \in \R^2$ representing the control input at time step $t \in [\horizon].$ 
The state vector $\xstate^j_t$ consists of position and velocity in the horizontal and vertical directions, and the control vector $\control^j_t$ consists of acceleration in the horizontal and vertical directions, respectively. 
We discretize the double-integrator dynamics at resolution $\Delta t$, i.e., 
\begin{equation}
\label{eqn:V-exp-dynamics}
\xstate^j_{k,t+1} = \underbrace{\begin{bmatrix}
    1 & 0 & \Delta t & 0\\
    0 & 1 & 0 & \Delta t\\
    0 & 0 & 1 & 0\\
    0 & 0 & 0 & 1
\end{bmatrix}}_{A} 
\xstate_{k,t}^j + \underbrace{\begin{bmatrix}
    \frac{1}{2}\Delta t^2 & 0\\
    0 & \frac{1}{2}\Delta t^2\\
    \Delta t & 0\\
    0 & \Delta t
\end{bmatrix}}_{B} u^j_{k,t}.
\end{equation}
Note that \cref{eqn:V-exp-dynamics} encodes the constraint \cref{eqn:III-prelim-lqgame-constraint1}.
A similar variant of \cref{eqn:V-exp-dynamics} can be formed for \cref{eqn:III-prelim-lqgame-constraint2}.
Furthermore, we assume the state and control variables to be bounded by $-1 \leq \xstate_{k,t}^j, \control_{k,t}^j \leq 1$ for all vehicles.
 
Next, we define the individual vehicles' objective functions \cref{eqn:III-prelim-lqgame-obj}.
The component of each vehicle's objective which pertains only to their upper- and lower-level variables is encoded by the function $Q(\xstate^j_{k,t},\binvarz_{S^j:T^j,k}^j) \coloneqq \frac{1}{2} \sum_{k=1}^{\waypoints}  \|{\mathbf{p}}^j_{k,1} - \sum_{i=S^j}^{T^j} z_{i,k}^j\hat{\mathbf{p}}_i\|_2^2$.
This cost term penalizes any deviations of the vehicle's state at the start of each trajectory segment $k$, $\mathbf{p}^j_{k,1} = (\xstate_{k,1}^{j,x}, \xstate_{k,1}^{j,y})$, from the \emph{chosen} waypoint, whose location is determined by the product $\binvarz_{i,k}^j \hat{\mathbf{p}}_i$.  

We introduce $I(\mathbf\xstate^j, \mathbf\xstate^{\neg j})$ to account for inter-agent interaction for the lower-level game. 
We model this interaction as $I(\mathbf\xstate^j, \mathbf\xstate^{\neg j}) = \sum_{j' \neq j}^\players \sum_{k=1}^\waypoints \sum_{t=1}^\horizon \| C(\mathbf{\xstate}_{k,t}^j - \mathbf{\xstate}_{k,t}^{j'} + \mathbf{r}^{j j'})\|_2^2$. 
In this section, we consider a one-dimensional variant of $I(\mathbf\xstate^j, \mathbf\xstate^{\neg j})$, i.e., $C = \begin{bmatrix}
    1 & 0 & 0 & 0 
\end{bmatrix}$.
This means that $C\mathbf{\xstate}_{k,t}^j = \xposition_{k,t}^{j,x}$.
In the two-vehicle setting (\cref{fig:comparison-formation-flight-two-vehicles}), we consider $C\mathbf{r}^{12} = r,$ and $C\mathbf{r}^{21} = -r$.
Here, $r>0$ models the situation where Vehicle $1$ wishes to stay on the left of Vehicle $2$ \emph{and} Vehicle $2$ wants to be on the right of Vehicle $1$, {with a separation of $r$ along the horizontal axis.
In the three-vehicle setting (\cref{fig:comparison-formation-flight-three-vehicles}), each of the $\mathbf{r}^{j j'}$ are distinct.

\subsection{Detailed Analysis of Results}
In the following experiment with two vehicles, we consider a scenario with $\nodes = 7$ nodes, and set the number of waypoints to $\waypoints = 5$.
Because their \emph{start} and \emph{terminal} nodes are already given, each vehicle must select $3$ intermediate nodes (out of 7 possibilities) to visit. 
In this example, we set the planning horizon as $\horizon = 7$, i.e., the number of subwaypoints between two waypoints (in a trajectory segment). 
For the three-vehicle case, we set the number of waypoints to $\waypoints = 4$, while keeping all other parameters unchanged. 

\begin{figure}[t]  
    \centering
    \begin{minipage}{\linewidth}
        \centering
        \begin{subfigure}{0.49\linewidth}
            \centering
            \includegraphics[width=\textwidth]{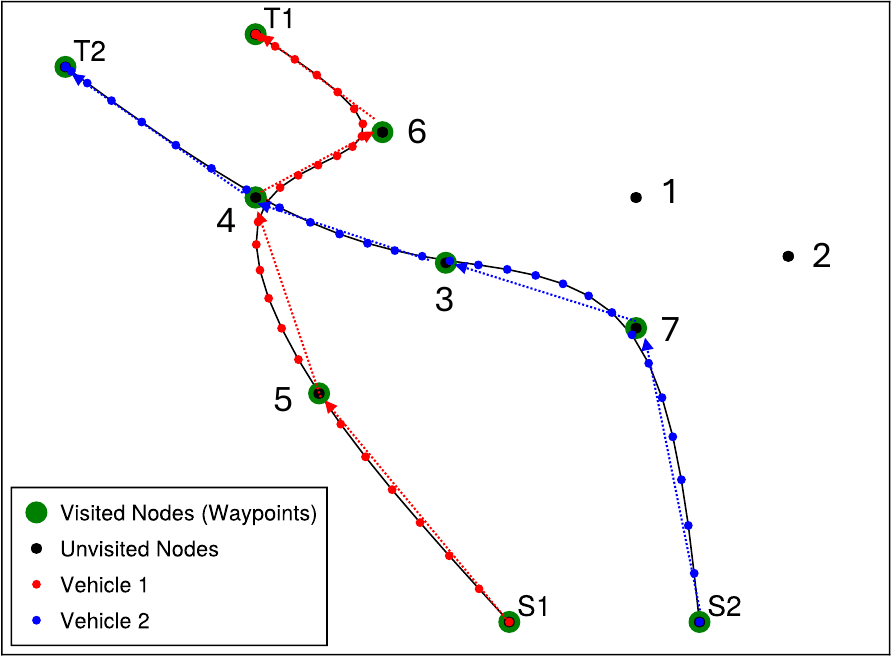}
            \caption{No interaction cost}
            \label{fig:formation_flight0}
        \end{subfigure}
        \hfill
        \begin{subfigure}{0.49\linewidth}
            \centering
            \includegraphics[width=\textwidth]{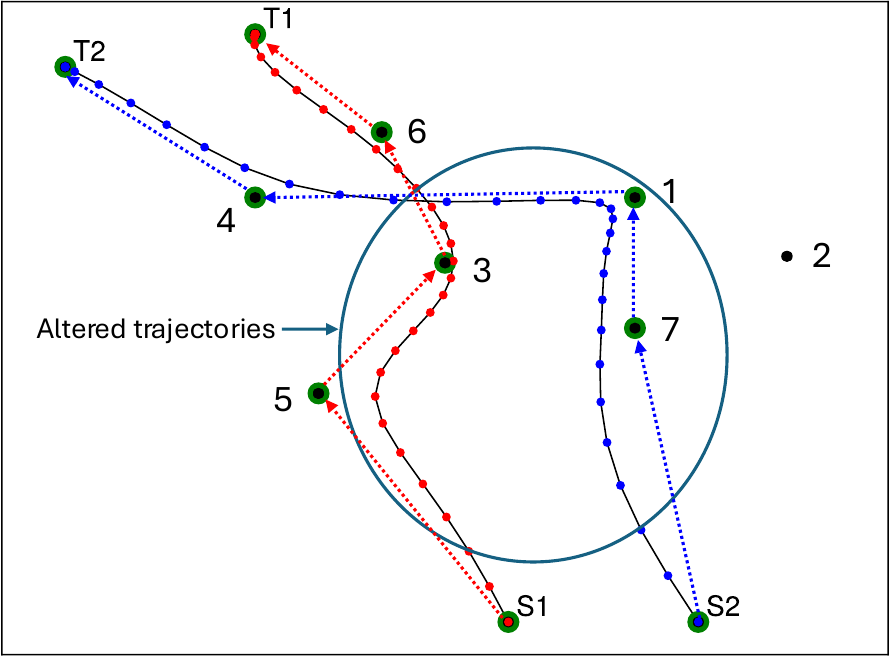}
            \caption{With interaction cost}
            \label{fig:formation_flight1}
        \end{subfigure}
    \end{minipage}

    \caption{Comparison of optimal vehicle routes and trajectories in a two-vehicle formation flight routing game. Dashed straight lines indicate each vehicle’s visited nodes.}
    \label{fig:comparison-formation-flight-two-vehicles}
\end{figure}

\begin{figure}[t]  
    \centering
    \begin{minipage}{\linewidth}
        \centering
        \begin{subfigure}{0.49\linewidth}
            \centering
            \includegraphics[width=\textwidth]{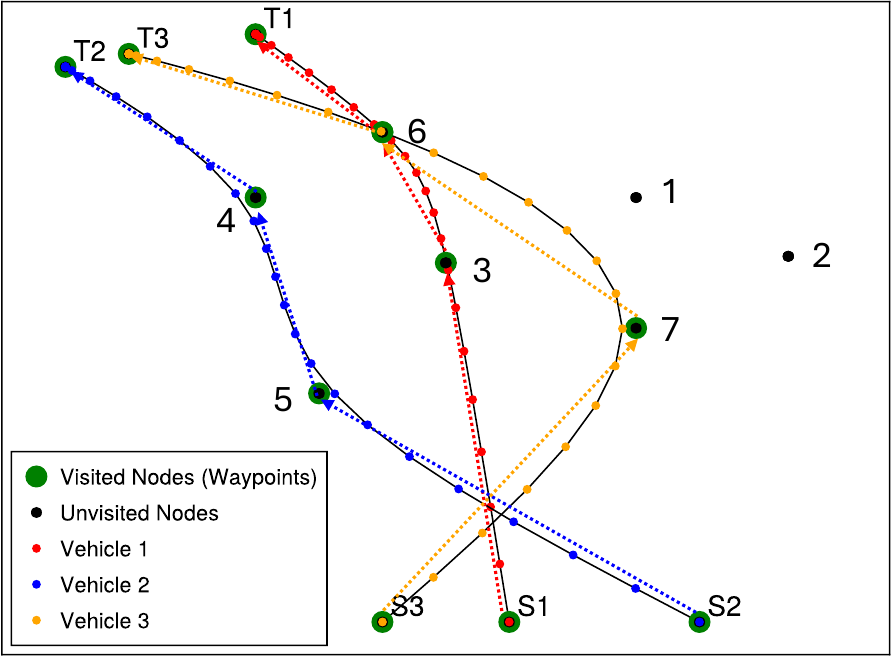}
            \caption{No interaction cost}
            \label{fig:formation_flight2}
        \end{subfigure}
        \hfill
        \begin{subfigure}{0.49\linewidth}
            \centering
            \includegraphics[width=\textwidth]{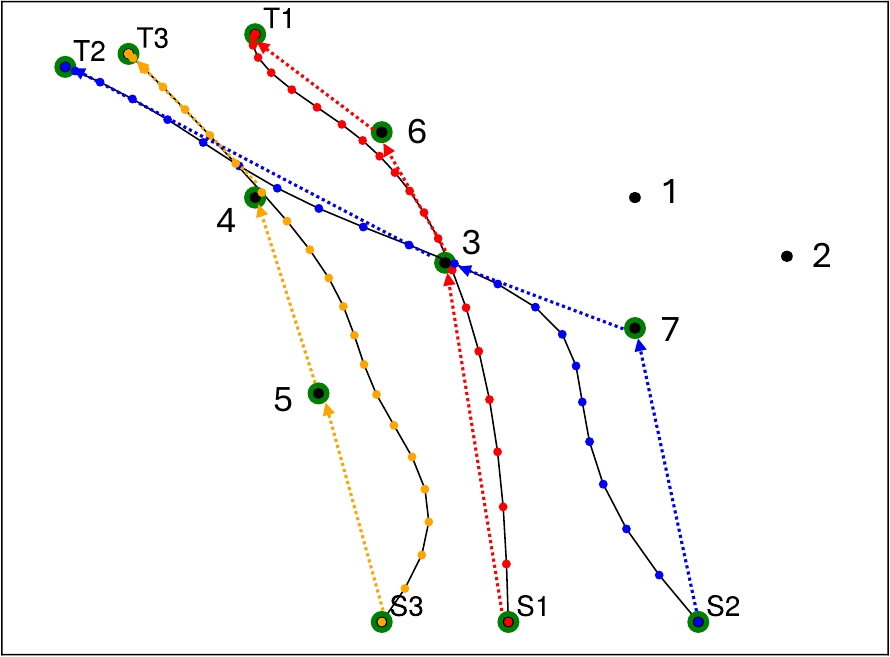}
            \caption{With interaction cost}
            \label{fig:formation_flight3}
        \end{subfigure}
    \end{minipage}

    \caption{Comparison of optimal vehicle routes and trajectories in a three-vehicle formation flight routing game. Dashed straight lines indicate each vehicle’s visited nodes.}
    \label{fig:comparison-formation-flight-three-vehicles}
\end{figure}

\cref{fig:comparison-formation-flight-two-vehicles} and \cref{fig:comparison-formation-flight-three-vehicles} present the impact of interaction costs in \cref{eqn:III-prelim-lqgame} into the lower-level game's objective for each vehicle in two-vehicle and three-vehicle settings, respectively. 
In the absence of $I(\mathbf\xstate^j, \mathbf\xstate^{\neg j})$, each vehicle generates their trajectory to \emph{only} minimize their own cost while visiting required waypoints.
This implies that the game in \cref{eqn:IV-method-bilevel-auxillary} is essentially a set of decoupled optimization problems, for any fixed choice of routing variables $\mathbf\binvarz_u$}.
In constrast, we observe distinct alterations in the trajectories in the presence of the interaction term. 
The circled region in \cref{fig:formation_flight1} highlights the area where trajectories have changed substantially so that the vehicles can maintain formation.
We also observe that the routing selections change as well.
In \cref{fig:formation_flight0}, Vehicle 1 (red) chooses to visit nodes $\mathrm{S}1 \rightarrow 5 \rightarrow 4 \rightarrow 6 \rightarrow \mathrm{T}1$ and Vehicle 2 (blue)'s route is $\mathrm{S}2 \rightarrow 7 \rightarrow 3 \rightarrow 4 \rightarrow \mathrm{T}2$. 
However, accounting for the interaction term, their trajectories change in \cref{fig:formation_flight1}, i.e., Vehicle 1 (red) selects nodes $\mathrm{S}1 \rightarrow 5 \rightarrow 3 \rightarrow 6 \rightarrow \mathrm{T}1$ and Vehicle 2 (blue) visits nodes $\mathrm{S}2 \rightarrow 7 \rightarrow 1 \rightarrow 4 \rightarrow \mathrm{T}2$.
This behavior is a result of the inter-vehicle interaction at the lower level, where each vehicle alters its optimal trajectory in response to the other vehicle's paths.

We observe similar trajectory adjustments in the three-vehicle formation flight scenario.   
Without interaction costs (\cref{fig:formation_flight2}), Vehicle 1 (red) follows the route $\mathrm{S}_1 \rightarrow 3 \rightarrow 6 \rightarrow \mathrm{T}_1$, Vehicle 2 (blue) follows $\mathrm{S}_2 \rightarrow 5 \rightarrow 4 \rightarrow \mathrm{T}_2$, and Vehicle 3 (orange) follows $\mathrm{S}_3 \rightarrow 7 \rightarrow 6 \rightarrow \mathrm{T}_3$.
However, when accounting for inter-agent proximity (in the horizontal direction), the vehicles adjust their trajectories to maintain the designated formation \cref{fig:formation_flight3}.  
Vehicle 1 (red) remains in the middle and makes minimal modifications to its trajectory.  
On the other hand, Vehicles 2 (blue) and 3 (orange) significantly alter their paths to ensure proper spacing from Vehicle 1.  
To this end, Vehicle 2 (blue) visits the nodes $\mathrm{S}_2 \rightarrow 7 \rightarrow 3 \rightarrow \mathrm{T}_2$ and Vehicle 3 (orange) follows the path $\mathrm{S}_3 \rightarrow 5 \rightarrow 4 \rightarrow \mathrm{T}_3$.

We emphasize that these routing changes occur because our formulation explicitly couples the choice of vehicles' waypoints at the high level with their low-level planning process.
The examples above illustrate that, when vehicles' trajectory planning preferences change (e.g., to include a preference for flying in formation), the hierarchical router can \emph{proactively} generate a route which accounts for that change.
Furthermore, due to the convexity of the proposed formulation and by \cref{proposition-convergence}, we are assured that this new route is globally optimal.
\section{Conclusion} \label{sec:conclusion}

In this paper, we present a bilevel game-theoretic framework for multi-agent hierarchical routing that integrates discrete route assignments with continuous trajectory optimization.
We propose a convex reformulation of the resulting mixed-integer program, and demonstrate our results in two- and three-vehicle formation flight scenarios.
Future work will focus on extending the proposed framework to real-world applications, particularly in next-generation air traffic management decision support tools. For example, the US Federal Aviation Administration seeks to adopt a new automation system, \emph{Flow Management Data and Services} (FMDS), which is a potentially multi-hundreds of millions USD investment \cite{faa_fmds}. FMDS seeks to integrate \emph{data streams} such as real- and near-real time aircraft trajectory information with \emph{services} such as air traffic demand-capacity balancing. The modeling and algorithmic results herein provide a foundation for future services that could be run on FMDS, potentially impacting tens of thousands of commercial flights serviced by FMDS.

\bibliographystyle{IEEEtran}
{\footnotesize\bibliography{6_ref.bib}}

\begin{thebibliography}{10}
\providecommand{\url}[1]{#1}
\csname url@samestyle\endcsname
\providecommand{\newblock}{\relax}
\providecommand{\bibinfo}[2]{#2}
\providecommand{\BIBentrySTDinterwordspacing}{\spaceskip=0pt\relax}
\providecommand{\BIBentryALTinterwordstretchfactor}{4}
\providecommand{\BIBentryALTinterwordspacing}{\spaceskip=\fontdimen2\font plus
\BIBentryALTinterwordstretchfactor\fontdimen3\font minus \fontdimen4\font\relax}
\providecommand{\BIBforeignlanguage}[2]{{%
\expandafter\ifx\csname l@#1\endcsname\relax
\typeout{** WARNING: IEEEtran.bst: No hyphenation pattern has been}%
\typeout{** loaded for the language `#1'. Using the pattern for}%
\typeout{** the default language instead.}%
\else
\language=\csname l@#1\endcsname
\fi
#2}}
\providecommand{\BIBdecl}{\relax}
\BIBdecl

\bibitem{zeng2014solving}
B.~Zeng and Y.~An, ``Solving bilevel mixed integer program by reformulations and decomposition,'' \emph{Optimization online}, pp. 1--34, 2014.

\bibitem{kleinert2021survey}
T.~Kleinert, M.~Labb{\'e}, I.~Ljubi{\'c}, and M.~Schmidt, ``A survey on mixed-integer programming techniques in bilevel optimization,'' \emph{EURO Journal on Computational Optimization}, vol.~9, p. 100007, 2021.

\bibitem{richetta1994dynamic}
O.~Richetta and A.~R. Odoni, ``Dynamic solution to the ground-holding problem in air traffic control,'' \emph{Transportation research part A: Policy and practice}, vol.~28, no.~3, pp. 167--185, 1994.

\bibitem{vranas1994multi}
P.~B. Vranas, D.~J. Bertsimas, and A.~R. Odoni, ``The multi-airport ground-holding problem in air traffic control,'' \emph{Operations Research}, vol.~42, no.~2, pp. 249--261, 1994.

\bibitem{bertsimas1998air}
D.~Bertsimas and S.~S. Patterson, ``The air traffic flow management problem with enroute capacities,'' \emph{Operations research}, vol.~46, no.~3, pp. 406--422, 1998.

\bibitem{wu2025managing}
H.~Wu, M.~Z. Li, J.~Henderson, E.~M. Bongo, and L.~A. Weitz, ``Managing congestion in advanced air mobility operations using a bi-level optimization approach,'' in \emph{AIAA SCITECH 2025 Forum}, 2025, p. 0582.

\bibitem{taye2024safe}
A.~G. Taye, R.~Valenti, A.~Rajhans, A.~Mavrommati, P.~J. Mosterman, and P.~Wei, ``Safe and scalable real-time trajectory planning framework for urban air mobility,'' \emph{Journal of Aerospace Information Systems}, vol.~21, no.~8, pp. 641--650, 2024.

\bibitem{zhu2014game}
M.~Zhu, M.~Otte, P.~Chaudhari, and E.~Frazzoli, ``Game theoretic controller synthesis for multi-robot motion planning part i: Trajectory based algorithms,'' in \emph{2014 IEEE International Conference on Robotics and Automation (ICRA)}.\hskip 1em plus 0.5em minus 0.4em\relax IEEE, 2014, pp. 1646--1651.

\bibitem{williams2023distributed}
Z.~Williams, J.~Chen, and N.~Mehr, ``Distributed potential ilqr: Scalable game-theoretic trajectory planning for multi-agent interactions,'' in \emph{2023 IEEE International Conference on Robotics and Automation (ICRA)}.\hskip 1em plus 0.5em minus 0.4em\relax IEEE, 2023, pp. 01--07.

\bibitem{9197129}
D.~Fridovich-Keil, E.~Ratner, L.~Peters, A.~D. Dragan, and C.~J. Tomlin, ``Efficient iterative linear-quadratic approximations for nonlinear multi-player general-sum differential games,'' in \emph{2020 IEEE International Conference on Robotics and Automation (ICRA)}, 2020, pp. 1475--1481.

\bibitem{di2019newton}
B.~Di and A.~Lamperski, ``Newton’s method and differential dynamic programming for unconstrained nonlinear dynamic games,'' in \emph{2019 IEEE 58th conference on decision and control (CDC)}.\hskip 1em plus 0.5em minus 0.4em\relax IEEE, 2019, pp. 4073--4078.

\bibitem{forrestsiopt}
\BIBentryALTinterwordspacing
F.~Laine, D.~Fridovich-Keil, C.-Y. Chiu, and C.~Tomlin, ``The computation of approximate generalized feedback nash equilibria,'' \emph{SIAM Journal on Optimization}, vol.~33, no.~1, pp. 294--318, 2023. [Online]. Available: \url{https://doi.org/10.1137/21M142530X}
\BIBentrySTDinterwordspacing

\bibitem{10021943}
N.~Mehr, M.~Wang, M.~Bhatt, and M.~Schwager, ``Maximum-entropy multi-agent dynamic games: Forward and inverse solutions,'' \emph{IEEE Transactions on Robotics}, vol.~39, no.~3, pp. 1801--1815, 2023.

\bibitem{10160799}
E.~L. Zhu and F.~Borrelli, ``A sequential quadratic programming approach to the solution of open-loop generalized nash equilibria,'' in \emph{2023 IEEE International Conference on Robotics and Automation (ICRA)}, 2023, pp. 3211--3217.

\bibitem{fisac2019hierarchical}
J.~F. Fisac, E.~Bronstein, E.~Stefansson, D.~Sadigh, S.~S. Sastry, and A.~D. Dragan, ``Hierarchical game-theoretic planning for autonomous vehicles,'' in \emph{2019 International conference on robotics and automation (ICRA)}.\hskip 1em plus 0.5em minus 0.4em\relax IEEE, 2019, pp. 9590--9596.

\bibitem{JI2023104109}
\BIBentryALTinterwordspacing
K.~Ji, N.~Li, M.~Orsag, and K.~Han, ``Hierarchical and game-theoretic decision-making for connected and automated vehicles in overtaking scenarios,'' \emph{Transportation Research Part C: Emerging Technologies}, vol. 150, p. 104109, 2023. [Online]. Available: \url{https://www.sciencedirect.com/science/article/pii/S0968090X23000980}
\BIBentrySTDinterwordspacing

\bibitem{marinakis2007new}
Y.~Marinakis, A.~Migdalas, and P.~M. Pardalos, ``A new bilevel formulation for the vehicle routing problem and a solution method using a genetic algorithm,'' \emph{Journal of Global Optimization}, vol.~38, pp. 555--580, 2007.

\bibitem{fan2011bi}
W.~Fan and R.~B. Machemehl, ``Bi-level optimization model for public transportation network redesign problem: Accounting for equity issues,'' \emph{Transportation Research Record}, vol. 2263, no.~1, pp. 151--162, 2011.

\bibitem{bianco2009bilevel}
L.~Bianco, M.~Caramia, and S.~Giordani, ``A bilevel flow model for hazmat transportation network design,'' \emph{Transportation Research Part C: Emerging Technologies}, vol.~17, no.~2, pp. 175--196, 2009.

\bibitem{basar1998}
\BIBentryALTinterwordspacing
T.~Başar and G.~J. Olsder, \emph{Dynamic Noncooperative Game Theory, 2nd Edition}.\hskip 1em plus 0.5em minus 0.4em\relax Society for Industrial and Applied Mathematics, 1998. [Online]. Available: \url{https://epubs.siam.org/doi/abs/10.1137/1.9781611971132}
\BIBentrySTDinterwordspacing

\bibitem{dirkse1995path}
S.~P. Dirkse and M.~C. Ferris, ``The path solver: a nommonotone stabilization scheme for mixed complementarity problems,'' \emph{Optimization methods and software}, vol.~5, no.~2, pp. 123--156, 1995.

\bibitem{dempe2013bilevel}
S.~Dempe and A.~B. Zemkoho, ``The bilevel programming problem: reformulations, constraint qualifications and optimality conditions,'' \emph{Mathematical Programming}, vol. 138, pp. 447--473, 2013.

\bibitem{dempe2015bilevel}
S.~Dempe, V.~Kalashnikov, G.~A. P{\'e}rez-Vald{\'e}s, and N.~Kalashnykova, ``Bilevel programming problems,'' \emph{Energy Systems. Springer, Berlin}, vol.~10, pp. 978--3, 2015.

\bibitem{scholtes2001convergence}
S.~Scholtes, ``Convergence properties of a regularization scheme for mathematical programs with complementarity constraints,'' \emph{SIAM Journal on Optimization}, vol.~11, no.~4, pp. 918--936, 2001.

\bibitem{schwartz2011mathematical}
A.~Schwartz, ``Mathematical programs with complementarity constraints: Theory, methods and applications,'' Ph.D. dissertation, Universit{\"a}t W{\"u}rzburg, 2011.

\bibitem{belotti2013mixed}
P.~Belotti, C.~Kirches, S.~Leyffer, J.~Linderoth, J.~Luedtke, and A.~Mahajan, ``Mixed-integer nonlinear optimization,'' \emph{Acta Numerica}, vol.~22, pp. 1--131, 2013.

\bibitem{kronqvist2019review}
J.~Kronqvist, D.~E. Bernal, A.~Lundell, and I.~E. Grossmann, ``A review and comparison of solvers for convex minlp,'' \emph{Optimization and Engineering}, vol.~20, pp. 397--455, 2019.

\bibitem{achterberg2005branching}
T.~Achterberg, T.~Koch, and A.~Martin, ``Branching rules revisited,'' \emph{Operations Research Letters}, vol.~33, no.~1, pp. 42--54, 2005.

\bibitem{linderoth1999computational}
J.~T. Linderoth and M.~W. Savelsbergh, ``A computational study of search strategies for mixed integer programming,'' \emph{INFORMS Journal on Computing}, vol.~11, no.~2, pp. 173--187, 1999.

\bibitem{costa2009benders}
A.~M. Costa, J.-F. Cordeau, and B.~Gendron, ``Benders, metric and cutset inequalities for multicommodity capacitated network design,'' \emph{Computational Optimization and Applications}, vol.~42, pp. 371--392, 2009.

\bibitem{faa_fmds}
\BIBentryALTinterwordspacing
{Federal Aviation Administration}, ``Flow management data and services (fmds),'' 2025, accessed: 2025-03-16. [Online]. Available: \url{https://www.faa.gov/fmds}
\BIBentrySTDinterwordspacing

\end{thebibliography}


\end{document}